\newtheorem{assump}{Assumption}
\newtheorem{thm}{Theorem}
\DeclareMathOperator*{\argmin}{arg\,min}
\newcommand{\bbeta}{{\boldsymbol{\beta}}}
\newcommand{\btheta}{{\boldsymbol{\theta}}}
\newcommand{\bTheta}{{\boldsymbol{\Theta}}}
\newcommand{\bX}{\mathbf{X}}
\newcommand{\bx}{\mathbf{x}}
\newcommand{\bb}{\mathbf{b}}
\newcommand{\calR}{\mathcal{R}}
\newcommand{\R}{\mathbbm{R}}
\newcommand{\Sp}{\mathbbm{S}}
\newcommand{\KL}{\mathbbm{KL}}
\begin{document}
\title{Regularized Maximum Likelihood Estimation for the Random Coefficients Model}
\author{\begin{tabular}{ccc}
Fabian Dunker\footnote{School of Mathematics and Statistics, University of Canterbury, Private Bag 4800, Christchurch 8140, New Zealand, fabian.dunker@canterbury.ac.nz} & Emil Mendoza\footnote{School of Mathematics and Statistics, University of Canterbury, Private Bag 4800, Christchurch 8140, New Zealand, emil.mendoza@pg.canterbury.ac.nz} \footnote{Corresponding author} & Marco Reale\footnote{School of Mathematics and Statistics, University of Canterbury, Private Bag 4800, Christchurch 8140, New Zealand, marco.reale@canterbury.ac.nz}\\
\small{University of Canterbury}  & \small{University of Canterbury}  & \small{University of Canterbury}
\end{tabular}
}
\maketitle

\begin{abstract}
\noindent 
The random coefficients model $Y_i={\beta_0}_i+{\beta_1}_i {X_1}_i+{\beta_2}_i {X_2}_i+\ldots+{\beta_d}_i {X_d}_i$, with $\bX_i$, $Y_i$, $\bbeta_i$ i.i.d, and $\bbeta_i$ independent of $\bX_i$ is often used to capture unobserved heterogeneity in a population.  We propose a quasi-maximum likelihood method to estimate the joint density distribution of the random coefficient model. This method implicitly involves the inversion of the Radon transformation in order to reconstruct the joint distribution, and hence is an inverse problem. Nonparametric estimation for the joint density of $\bbeta_i=({\beta_0}_i,\ldots, {\beta_d}_i)$ based on kernel methods or Fourier inversion have been proposed in recent years. Most of these methods assume a heavy tailed design density $f_\bX$. To add stability to the solution, we apply regularization methods. We analyze the convergence of the method without assuming heavy tails for $f_\bX$ and illustrate performance by applying the method on simulated and real data. To add stability to the solution, we apply a Tikhonov-type regularization method.
\end{abstract}

\section{Introduction}\label{sec:intro}

Unobserved heterogeneity in a population is an important problem in statistics and econometrics. A common way to account for unobserved heterogeneity is the use of random effects or mixed models which usually impose parametric assumptions on the distribution of the random effects. This is a significant restriction on the structure of the heterogeneity. In recent years the linear random coefficient model, which is a nonparametric alternative, attracted significant attention in the literature. This model is given by
\begin{equation}\label{eqn:rc_model}
	Y_i=\beta_{0i}+\beta_{1i} X_{1i}+\beta_{2i} X_{2i}+\ldots+\beta_{di} X_{di}.
\end{equation}
With $\bbeta_i = (\beta_{0i}, \beta_{1i}, \beta_{2i}, \ldots, \beta_{di})^\top$ and $\bX_i = (1, X_{1i}, X_{2i}, \ldots X_{di})^\top$ we can write \eqref{eqn:rc_model} as $Y_i = \bbeta_i^\top \bX_i$. We assume that $\bX_i$, $Y_i$, $\bbeta_i$ are i.i.d random variables with $\bbeta_i$ and $\bX_i$ being independent, and that $Y_i$ and $\bX_i$ are observed while $\bbeta_i$ is unknown. The objective is to estimate the joint density $f_\bbeta$ of $\bbeta_i$ nonparametrically. 

In this paper we propose and analyze a regularized maximum likelihood estimator for the density $f_\bbeta$
\begin{align}\label{eqn:method1}
\hat f_\bbeta = \argmin_{f\ge 0,\, \|f\|_{L^1}=1} -\ell(f|Y_1,\bX_1, \ldots Y_n,\bX_n)+\alpha \calR(f)
\end{align}
Here $\ell$ is the log-likelihood functional, $\alpha\ge 0$ is a smoothing parameter, and $\calR$ is a convex lower semi-continuous functional which regularizes the estimator. Both $\alpha$ and $\calR$ are chosen by the statistician. The choice of $\calR$ can consider a priori information about $f_\bbeta$ like smoothness or sparsity. Details of the method and examples for $\calR$ will be presented in Section \ref{sec:method}. Estimating $f_\bbeta$ from observations of $Y_i$ and $\bX_i$ is an ill-posed inverse problem. The degree of ill-posedness depends on the tail behavior of the design density $f_\bX$. 

In this paper we give conditions for method \eqref{eqn:method1} to achieve the optimal rates for design densities with different tail behavior or with bounded support. 
We evaluate the the small sample performance in simulations and have a data application in consumer demand.

Nonparametric estimation in model \eqref{eqn:rc_model} with one regressor, i.e. $d=1$ was first developed by \cite{Beran1992, Beran1996, Feuerverger2000}. The optimal rate for design densities with Cauchy-type tails is derived in \cite{hoderlein2010} for a kernel method. For the special case of just one regressor, i.e. $d=1$, optimal rates for tails with polynomial decay are presented in \cite{HM:20}. Faster decay of the tails and even bounded support for one regressor was considered in \cite{Hohmann2016}. Testing in model \eqref{eqn:rc_model} has been considered in \cite{Breunig2018, Dunker2019}.

Regularized maximum likelihood methods have been considered for other problems than model \eqref{eqn:rc_model}, e.g. in \cite{WH:12,HW:13, hohage2016inverse, Dunker2014}. We also refer to the literature on general convex regularization functionals, see \cite{Scherzer_etal:09, SKHK:12, benning_burger_2018} for an overview.

Random coefficients modmodels are frequently used to evaluate panel data, cf. \cite{Hsiao14} or \cite{Hasio04}, Chapter 6, for an overview. Modeling and estimating consumer demand in industrial organization and marketing often makes use of random coefficients \cite{BLP:95, Petrin:02, Nevo:01, Dube12}. In all these works, parametric assumptions on $f_{\bbeta}$ are imposed. els are applied in econometrics, epidemiology, and quantum mechanics. Applications in epidemiology are considered by \cite{Greenland00, Greenland06}. In economics, random coefficients Recently, nonparametric approaches for random coefficients became popular in microeconometrics \cite{hoderlein2010, Masten18, HHM:17}, frequently combined with binary choice \cite{Ichimura98, Berry07, Gautier12, Gautier13, Masten16, DHK:13, Fox16,DHK:17, DHKS:18}, among others. The random coefficients model also appears in quantum homodyne tomography, see \cite{Feuerverger2000, butucea2007}.

The paper is organized as follows. Section \ref{sec:method} gives a detailed account of the method. A convergence rate result is presented in Section \ref{sec:convergence}. We explain our implementation of the method in Section \ref{sec:implementation} and show simulation results in Section \ref{sec:simulation}. Section \ref{sec:application} presents an application to real data. The paper closes with a conclusion in Section \ref{sec:conclusion}.

\section{Method} \label{sec:method}

Throughout this paper we assume that $\bbeta$ has a Lebesgue density $f_\bbeta$. If in addition the conditional density $f_{Y|\bX}$ exists, the two densities are connected by the integral equation

\begin{align}
	f_{Y|\bX}(y|\bX=\bx)=\int_{\R^{d}}\mathbbm{1}\big\{\bb^\top \bx=y\big\}f_\bbeta(\bb)d\mu_d(\bb) = \int_{\bb^\top \bx=y}f_\bbeta(\bb)d\mu_d(\bb)
\label{eqn:line_int_f_beta}
\end{align}

where $\mathbbm{1}$ denotes the index function and $\mu_d$ the induced Lebesgue measure on the $d$-dimensional hyper-plane given by $\bb^\top \bX=Y$. It was pointed out in \cite{Beran1996} that \eqref{eqn:line_int_f_beta} becomes the well known Radon transformation $f_{S|\bTheta}=\calR(f_\bbeta)$ when $\bX$ and $Y$ are normalized to $\bTheta = \bX/\|\bX\|$ and $S = Y/\|\bX\|$. The injectivity of the Randon transformation implies that $f_\bbeta$ is identified when the support of $\bTheta$ contains an open neighborhood, see \cite{Masten18} for minimal conditions.

It is well known that an integral equation of the first kind like \eqref{eqn:line_int_f_beta} and the Radon transformation are ill-posed inverse problems, i.e. $f_\bbeta$ does not depend continuously on $f_{Y|\bX}$ or $f_{S|\bTheta}$. Nonparametric estimation of $f_\bbeta$ requires regularization. It is possible to derive a nonparametric estimate for $f_\bbeta$ by inverting the Radon transformation on a nonparametric estimator $\hat f_{S|\bTheta}$. We want to avoid this two step approach and suggest an estimator that allows the statistician to use a priori knowledge of $f_\bbeta$ if available.

%
%
%
 
Our method is based on maximum likelihood estimation. The likelihood functional for $f_\bbeta$ given the observations is
\begin{equation}
	\mathcal{L}(f_\bbeta|Y,\bX)=\prod_{i=1}^{n}\int_{\mathbbm{R}}\mathbbm{1}\big\{\bbeta_i^\top \bX_i=Y_i\big\}f_\bbeta(\bb)d\mu(\bb),
	\label{eqn:likelihood_func}
\end{equation}
which leads to the expression for the average log-likelihood
\begin{equation}
	\bar{\ell}(f_\bbeta|Y,\bX)=\frac{1}{n}\sum_{i=1}^{n}\log\left[\int_{\mathbbm{R}}\mathbbm{1}\big\{\bbeta_i^\top \bX_i=Y_i\big\}f_\bbeta(b)d\mu(\bb)\right].
	\label{eqn:avg_lhood}
\end{equation}
Direct maximization of $\bar\ell(f_\bbeta|Y,\bX)$ over all densities it not feasible due to ill-posedness and will lead to overfitting. We stabilize the problem by adding a penalty term $\alpha \calR(f_\bbeta)$ and state the estimator as a minimization problem with negative log-likelihood
\begin{align}\label{eqn:method}
\hat {f_\bbeta}_\alpha = \argmin_{f\ge 0,\, \|f\|_{L^1}=1} -\bar\ell(f|Y,\bX)+\alpha \calR(f).
\end{align}
Here $\alpha \ge 0$ is a regularization parameter that controls a bias variance trade-off. 
Furthermore, $\calR$ is a convex lower semi-continuous functional. Typical choices for $\calR$ are

\begin{itemize}
\item Squared $L^2$ norm: $\calR(f) = \|f\|_{L^2}^2 = \|f\|_{2}^2$.

\item Sobolev Norm for $H^{r}$: $ \calR(f) =  \|f_\beta\|_{2}^{2}+\|f^{\prime}_\beta\|_{2}^{2}+\ldots+\|f^{(r)}_\beta\|_{2}^{2}$.

\item $L^p$ norm: $\calR(f) = \|f\|_{L^p}^p = \|f\|_{p}^p $ with $0\le p \le \infty$.

\item Sobolev Norm for $W^{r}_p$: $ \calR(f) =  \|f_\beta\|_{p}^{p}+\|f^{\prime}_\beta\|_{p}^{p}+\ldots+\|f^{(r)}_\beta\|_{p}^{p}$.

\item Entropy: $\calR(f) = \int{f(\bb)\ln{f(\bb)}d\bb}$.
\end{itemize}
While an $L^p$ penalty implies that the estimator $\hat f_\bbeta$ is in $L^p$, a Sobolev norm enforces more smoothness on the estimator. Note that an entropy penalty is natural for estimating a density. It does not imply that $f_\bbeta$ belongs to a particular $L^p$ space other than $L^1$ and only enforces a finite entropy. If a priori knowledge about $f_\bbeta$ like smoothness is available, $\calR$ can be chosen accordingly.

In addition to the regularization functional the regularization parameter $\alpha$ needs to be chosen. As usual in nonparametric estimation the optimal $\alpha$ depends on the unknown $f_\bbeta$. We propose adaptive estimation by Lepskii's principle which is popular for inverse problems, e.g. \cite{Tsybakov:00}, \cite{BauHoh:05}, \cite{mathe_2006}, \cite{hohage2016inverse}, \cite{Werner:18}, and \cite{HM:20}. It is computationally cheaper than many other parameter selection rules like cross-validation, which is a big advantage since our problem is already computationally demanding.

The principle works as follows: we compute the estimator $\hat f_{\bbeta_{\alpha_1}},\ldots,\hat f_{\bbeta_{\alpha_m}}$ for several parameters $\alpha_1 < \alpha_1 <\ldots < \alpha_m$ together with a bound on the standard deviation of each estimator which is a non-increasing function $\Psi:\{i,\ldots,m\} \rightarrow \mathbbm{R}$ which is specified in Theorem \ref{thm:lepskii}. The Lepskii regularization parameter  $\alpha_{\bar{j}}$ is selected by
\begin{align*}
\bar{j}:=\max\{j\leq m, \|{f_\bbeta}_{\alpha_i},{f_\bbeta}_{\alpha_j}\|\leq 2\Psi(i), \text{ for all } i \leq j\}.
\end{align*}
For further details on this parameter choice we refer to \cite{Lepskii:91},\cite{mathe_2006}, and \cite{hohage2016inverse}.

Our method can be interpreted as an elastic net with $L^1$ penalty combined with the penalty term given by $\calR$. It was poined out in \cite{heiss2019nonparametric} that the constraint $\|f_\bbeta\|_{L^1}=1$ together with a finite difference discretization of $f_\bbeta$ is equivalent to an $\ell^1$ penalty on the discretized values of $f_\bbeta$. Without the additional penalty term $\alpha\calR(f_\bbeta)$ this would lead to an unwanted shrinkage in the estimate which would set $\hat f_\bbeta$ to $0$ at most grid points and produce relatively large values for $\hat f_\bbeta$ at only a few grid points. To reduce this LASSO-type effect \cite{heiss2019nonparametric} introduce an additional quadratic constraint which transforms the method into an elastic net. This is analogous to our regularization term $\alpha\calR(f_\bbeta)$ although our setup is more flexible since it allows for a wide range of regularization functionals and not only for quadratic $\calR$.


\section{Convergence rates}\label{sec:convergence}


In this section we establish some theoretical properties of the estimator, such as the existence, and uniqueness of the solution to \eqref{eqn:method}, as well as its convergence properties. For technical reasons, we will state the theory for the normalized variables $\bTheta = \bX/\|\bX\|$ and $S = Y/\|\bX\|$. Note that the estimator $\hat f_\bbeta$ in \eqref{eqn:method} does not change when we replace $\bX$ by $\bTheta$ and $Y$ by $S$.


\subsection{Assumptions}

We define the linear compact integral operator 
\[
(Tf)(\btheta,s) = \int_{\R^{d}}\mathbbm{1}\big\{\bb^\top \btheta=s\big\}f(\bb)d\mu_d(\bb).
\] 
Hence, we are solving the inverse ill-posed problem
\begin{equation}
f_{S|\bTheta} = T f_\beta
\label{eq:inv_prob}.
\end{equation}
We start by discussing the assumptions.

\begin{assump}\label{as:model}
We have i.i.d. samples $S_i,\bTheta_i$, $i=1,\ldots,n$ generated by model \eqref{eqn:rc_model} with independent $\bbeta_i$ and $\bTheta_i$. Let $B \subset L^1(\R^{d+1})$ be convex with $f_\bbeta \in B$ and $T:B \rightarrow H^s(D)$ where $H^s$ is an $L^2$ based Sobolev space with $s>\frac{d+1}{2}$, and $D\subset\R^{d+1}$ is some bounded Lipschitz domain.
\end{assump}

\begin{assump}\label{as:R}
$\mathcal{R}: B \rightarrow \R$ is convex, and lower semi-continuous.
\end{assump}


Assumption \ref{as:model} is a technical assumption that we need to derive convergence rates. It is a mild smoothness assumption on the $Tf_\bbeta$ and needs $Tf_\bbeta$ to have bounded support. This implies that also $f_\bbeta$ has bounded support. Assumption \ref{as:R} holds for many regularization terms, e.g. for the examples listed in Section \ref{sec:method}. For more examples see \cite{hohage2016inverse}.

\bigskip

\begin{thm}\label{thm:exist} 
If the assumptions \ref{as:model} and \ref{as:R} hold, then \eqref{eqn:method} has a unique minimizer $\hat{f_\beta}_{\alpha}$ for all $\alpha >0$.
\end{thm}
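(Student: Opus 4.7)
My plan is to apply the direct method of the calculus of variations and to exploit strict convexity of $-\log$ where available. Let $F(f) := -\bar{\ell}(f|Y,\bX) + \alpha\calR(f)$ on the admissible set $C := \{f \in B : f\ge 0,\ \|f\|_{L^1}=1\}$. Note that $C$ is nonempty (it contains $f_\bbeta$ by Assumption~\ref{as:model}) and convex. Since $T$ is linear and $-\log$ is strictly convex on $(0,\infty)$, each term $-\log(Tf(\bTheta_i,S_i))$ is convex in $f$, and combined with Assumption~\ref{as:R} the full functional $F$ is convex and proper.

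For existence I would take a minimizing sequence $(f_k)\subset C$. The $L^1$-normalization pins the $L^1$ norm, and once one observes that $-\bar\ell$ is bounded from below on the minimizing sequence — using Assumption~\ref{as:model} together with the Sobolev embedding $H^s(D)\hookrightarrow C^0(\overline D)$, which makes pointwise evaluation $f\mapsto Tf(\bTheta_i,S_i)$ a bounded linear functional — it follows that $\calR(f_k)$ is bounded. From this bound I would extract a weakly convergent subsequence $f_k\rightharpoonup f^\ast$ in $B$; the convex set $C$ is weakly closed, so $f^\ast\in C$. Lower semi-continuity of $\calR$ (Assumption~\ref{as:R}) and continuity of each data-fidelity term under weak convergence (which follows because $T$ is compact into $H^s$ and hence the point evaluations $Tf_k(\bTheta_i,S_i)\to Tf^\ast(\bTheta_i,S_i)$) give $F(f^\ast)\le \liminf_k F(f_k)=\inf_C F$, establishing existence.

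For uniqueness, suppose $f^{(1)}\ne f^{(2)}$ both minimize $F$ on $C$. Their midpoint $f_m:=\tfrac{1}{2}(f^{(1)}+f^{(2)})$ lies in $C$, and convexity of $F$ forces $F(f_m)$ to equal $\inf_C F$ only if $F$ is affine along the segment $[f^{(1)},f^{(2)}]$. Strict convexity of $-\log$ then implies $Tf^{(1)}(\bTheta_i,S_i)=Tf^{(2)}(\bTheta_i,S_i)$ for every $i=1,\ldots,n$, so the data-fidelity term is identical for the two candidates. On this residual affine subspace, the claim then reduces to uniqueness of the $\calR$-minimizer over the (convex) intersection of $C$ with the finitely many hyperplanes $\{Tf(\bTheta_i,S_i)=c_i\}$.

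The main obstacle is precisely that last step: Assumption~\ref{as:R} supplies only convexity of $\calR$, not strict convexity, so one cannot immediately conclude $f^{(1)}=f^{(2)}$ on this subspace. I expect the proof either strengthens the working hypothesis on $\calR$ implicitly (strict convexity holds for all the listed examples — $L^2$, Sobolev, entropy, and $L^p$ for $1<p<\infty$) or leverages the $L^1$-normalization constraint via a boundary argument on $C$ together with the injectivity of the Radon transform under Assumption~\ref{as:model} to rule out the degenerate direction. Everything else — convexity, coercivity via the $\calR$-bound, weak lower semi-continuity, and weak closedness of $C$ — is routine once the functional-analytic setup from Assumption~\ref{as:model} is in place.
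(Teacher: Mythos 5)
Your existence argument is a spelled-out version of what the paper does in one line: the authors simply invoke Proposition 4.2(i) of \cite{hohage2016inverse} after checking that $T$ is compact, that $-\bar\ell$ and $\calR$ are convex and lower semi-continuous, and that $Tf\ge 0$ on the admissible set. Your direct-method argument (minimizing sequence, coercivity from the $\calR$-bound, weak closedness of $C$, weak lower semi-continuity, and continuity of the point evaluations via the embedding $H^s(D)\hookrightarrow C^0(\overline D)$ for $s>\frac{d+1}{2}$) is exactly the content of that cited result, so on existence you and the paper take essentially the same route.

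On uniqueness, you have not closed the argument, but the obstacle you identify is genuine and is in fact present in the paper's own proof. The paper asserts that ``the uniqueness of the minimizer follows from the strict convexity of $-\bar\ell$.'' As you observe, $-\bar\ell$ factors through the finite-rank linear map $f\mapsto\bigl(Tf(\bTheta_1,S_1),\ldots,Tf(\bTheta_n,S_n)\bigr)$, so it is strictly convex only in those $n$ values and has an infinite-dimensional affine null direction in $f$; it cannot be strictly convex as a functional on $B$. Two distinct admissible densities with identical line integrals over the $n$ observed hyperplanes yield the same likelihood, and Assumption~\ref{as:R} supplies only convexity of $\calR$, not strict convexity, so the degenerate direction is not ruled out in general. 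Uniqueness does hold once $\calR$ is strictly convex (as for the quadratic $L^2$, Sobolev, entropy, and $L^p$ with $1<p<\infty$ examples listed in Section~\ref{sec:method}), which is the implicit strengthening you anticipate; but as stated, under Assumptions~\ref{as:model} and~\ref{as:R} alone, neither your argument nor the paper's establishes uniqueness. Your diagnosis of where the proof must be repaired is correct.
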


As a smoothness assumption for the convergence rate results we use a variational source condition which has been used extensively in the inverse problems literature, e.g. \cite{flemming2010GeneralResid}, \cite{WH:12}, \cite{Dunker2014}, \cite{hohage2016inverse}. It uses Bregman distance between some density $f$ and $f_\bbeta$ with respect to $\calR$ and some $f^*_\bbeta \in \partial\calR(f_\bbeta)$ where $\partial\calR(f_\bbeta)$ is the subdifferential of the convex functional $\calR$ at $f_\bbeta$. The Bregman distance with respect to $\calR$ 
and $f^*_\bbeta$ is defined as
\[
D_\calR^{f^*_\bbeta}(f,\, f_\bbeta) := \calR(f) - \calR(f_\bbeta) - \langle f^*_\bbeta, f - f_\bbeta \rangle.
\]
For quadratic penalty in Hilbert spaces $\calR=\|\cdot\|^2$ we have 
$D_\calR^{f^*_\bbeta}(f,\, f_\bbeta)= \|f-f_\bbeta\|^2$. 
In general, $D_{\calR}^{f^*_\bbeta}$ is nonnegative with 
$D_{\calR}^{f^*_\bbeta}(f_\bbeta,f_\bbeta)=0$, but it is not necessarily 
symmetric or satisfies a triangle inequality. We will also use the Kullback-Leibler divergence $\KL(f,g)= \int f \ln\left(f/g\right) \, dx$.

\begin{assump}[Variational Source Condition] There exists $f_\beta^{*} \in \partial{R(f_\beta)}$, $c > 0$ and a concave, monotonically increasing function $\lambda:[0,\infty)\rightarrow [0,\infty)$ with $\lambda(0)=0$ such that
	\begin{equation}
	cD_{\calR}^{f_\beta^{*}}(f,f_\bbeta) \leq \calR(f) - \calR(f_\bbeta) + \lambda(\KL(f_{S|\bTheta};Tf))
	\end{equation}
for all densities $f$.
\label{asmp:var_ineq}
\end{assump}

The function $\lambda$ is called index function. 
The rate with which $\lambda$ increases compares the smoothing properties of $T$ with the smoothness of $f_\bbeta$. The slower $\lambda$ increases the faster the convergence of the method as we will see in Theorem \ref{thm:aprior_conv}.

Assumption \ref{asmp:var_ineq} is an abstract assumption which does not refer to a classical smoothness class like Sobolev or H\"older spaces. The popularity of this assumption in the literature comes from the fact that it is not only sufficient but also necessary for a certain convergence rate, \cite{Flemming:10,Flemming:12,flemming:12b}. In other words, the class of functions which fulfil Assumption \ref{asmp:var_ineq} with a specific $\lambda$ is the largest class of functions for which the convergence rate in Theorem \ref{thm:aprior_conv} is achieved. For several operators and regularisation functionals variatioinal source conditions are implied by classical smoothness assumptions, see \cite{WeHo:17, HoMi:19, WeSpHo:20}. Assumption \ref{asmp:var_ineq} hold for $f_\bbeta$ in an $L^2$ based Sobolev $H^r$ space in the following special case.
\begin{thm}
\label{thm:sc}
Let $f_\bbeta \in H^r(\R^n)$ with $0<r\le d-1$. Assume that $f_{S|\bTheta}$ is bounded and that $f_\bTheta$ is bounded and bounded away from zero. On the function class $\mathcal{T} = \{ f \in L^1(\R^d) | Tf \in L^2,  \|Tf\|_\infty\le c_f \}$ with some $c_f>0$ Assumption \ref{as:R} holds true for $\calR = \|\cdot\|^2_{L^2}$ with $\lambda(x) = x^\frac{2r}{2r+d-1}$. More generally, for $\calR = \|\cdot\|^2_{H^a}$ with $0 < r-a \le d-1$ Assumption \ref{as:R} holds true on $\mathcal{T}$ with $\lambda(x) = x^\frac{2(r-a)}{2(r-a)+d-1}$.
\end{thm}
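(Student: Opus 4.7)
My plan is to reduce Assumption \ref{asmp:var_ineq} to a Sobolev interpolation inequality by exploiting the quadratic structure of $\calR=\|\cdot\|_{H^a}^2$, and then to import the $L^2$-to-$\KL$ comparison together with the known smoothing of the Radon-type operator $T$. Setting $g:=f-f_\bbeta$, since $\calR$ is smooth quadratic its subdifferential at $f_\bbeta$ is the singleton $\{2f_\bbeta\}$ (via the $H^a$-Riesz identification), so the Bregman distance collapses to $D_{\calR}^{f_\bbeta^{*}}(f,f_\bbeta)=\|g\|_{H^a}^2$ and the polarization identity gives $\calR(f)-\calR(f_\bbeta)=\|g\|_{H^a}^2+2\langle f_\bbeta,g\rangle_{H^a}$. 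Substituting into Assumption \ref{asmp:var_ineq} reduces the whole problem, for some $c\in(0,1)$, to showing
\[
-2\langle f_\bbeta,g\rangle_{H^a}\le (1-c)\|g\|_{H^a}^2+\lambda(\KL(f_{S|\bTheta},Tf))
\]
on the admissible class.

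The next step is to bound $|\langle f_\bbeta,g\rangle_{H^a}|$ by trading derivatives between the two arguments. Writing $\langle u,v\rangle_{H^a}=\langle(1-\Delta)^{a+\tau}u,(1-\Delta)^{-\tau}v\rangle_{L^2}$ with $\tau=(r-2a)/2$ and applying Cauchy--Schwarz yields $|\langle f_\bbeta,g\rangle_{H^a}|\le \|f_\bbeta\|_{H^r}\|g\|_{H^{2a-r}}$, and Sobolev interpolation gives $\|g\|_{H^{2a-r}}\le\|g\|_{H^a}^{\theta}\|g\|_{H^{-s}}^{1-\theta}$ with $2a-r=\theta a-(1-\theta)s$ and $\theta\in(0,1)$ admissible exactly when $0<r-a\le a+s$. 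Here $s$ is the effective smoothing order of $T$ in the $L^2$-scale on the bounded Lipschitz domain $D$, so that $\|Tg\|_{L^2}\gtrsim\|g\|_{H^{-s}}$; matching the admissibility range with the stated $0<r-a\le d-1$ is what pins down the relevant value of $s$. The uniform bound $\|Tf\|_{\infty}\le c_f$ built into $\mathcal{T}$, together with the boundedness of $f_{S|\bTheta}$ and the two-sided bound on $f_\bTheta$, then yields the standard comparison $\|p-q\|_{L^2}^2\lesssim\KL(p,q)$ for uniformly bounded densities, hence $\|Tg\|_{L^2}^2\lesssim\KL(f_{S|\bTheta},Tf)$.

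Combining the pieces gives $|\langle f_\bbeta,g\rangle_{H^a}|\lesssim\|g\|_{H^a}^{\theta}\,\KL^{(1-\theta)/2}$, and Young's inequality with conjugate exponents $p=2/\theta$ and $q=2/(2-\theta)$ absorbs the $\|g\|_{H^a}^2$-contribution into $(1-c)\|g\|_{H^a}^2$, leaving a remainder of order $\KL^{(1-\theta)/(2-\theta)}$. Expressing $\theta$ and $1-\theta$ in terms of $r$, $a$ and $s$, and using the identification of $s$ from the interpolation step, produces the index function $\lambda(x)=x^{2(r-a)/(2(r-a)+d-1)}$; the first assertion is recovered as the special case $a=0$. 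The two delicate points I expect are (i) pinning down the effective smoothing order of $T$ in this restricted-Radon setup, where $\bTheta$ only fills a hemisphere of $\Sp^d$ rather than the full sphere because the first coordinate of $\bX$ is fixed to $1$, which is what determines the admissible range $0<r-a\le d-1$; and (ii) making the $L^2$-to-$\KL$ comparison valid globally on $\mathcal{T}$ rather than only near $f_\bbeta$, which is exactly why the class builds in the uniform $\infty$-bound on $Tf$ and why the two-sided bound on $f_\bTheta$ is needed.
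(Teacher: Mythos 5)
Your overall architecture --- collapse the Bregman distance for the quadratic penalty, reduce the variational inequality to a bound on the cross term $-2\langle f_\bbeta,g\rangle_{H^a}$, control that term via the smoothing of $T$ plus interpolation and Young, and finish with the $L^2$-to-$\KL$ comparison on the uniformly bounded class $\mathcal{T}$ --- is the standard direct verification of a variational source condition, and your last step coincides exactly with the paper's (it uses $\|\varphi-\psi\|_{L^2}^2\le\left(\tfrac{2}{3}\|\varphi\|_\infty+\tfrac{4}{3}\|\psi\|_\infty\right)\KL(\varphi;\psi)$ together with the multiplicativity of the power function $\lambda$). For the middle step, however, the paper takes a different route: it first establishes a spectral (H\"older) source condition $(T^*T)^\mu\rho=f_\bbeta$ with $\mu=\frac{r-a}{d-1}$, using that $T^*T$ maps $H^q$ into $H^{q+d-1}$ for the hemisphere Radon transform, and then cites the known implication that such a source condition with $\mu\le 1$ yields the variational source condition with $\lambda(x)=x^{2\mu/(2\mu+1)}$.

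The genuine gap in your version lies in the range of validity and in the identification of $s$. The smoothing order $s$ is not a free parameter to be ``pinned down by matching the admissibility range'': it is a property of the operator, and for the Radon transform one has $\|Tg\|_{L^2}\asymp\|g\|_{H^{-(d-1)/2}}$ (equivalently, $T^*T$ smooths by $d-1$, which is exactly the fact the paper's proof records). With $s=(d-1)/2$ your admissibility condition $0<r-a\le a+s$ becomes $0<r-a\le a+\frac{d-1}{2}$; for $a=0$ this is $r\le\frac{d-1}{2}$, i.e.\ $\mu\le\frac{1}{2}$, only half of the claimed range $r\le d-1$. This is the well-known limitation of the Cauchy--Schwarz/interpolation/Young argument: it verifies the variational source condition only up to $\mu=\frac{1}{2}$, and the regime $\frac{1}{2}<\mu\le 1$ requires the more refined arguments in the references the paper cites. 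Moreover, for $a>0$ your remainder exponent evaluates to $\frac{1-\theta}{2-\theta}=\frac{r-a}{r+s}=\frac{2(r-a)}{2r+d-1}$, which does not equal the claimed $\frac{2(r-a)}{2(r-a)+d-1}$ unless $a=0$; so even inside the admissible range your computation does not reproduce the stated index function for the $H^a$ penalty.
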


\subsection{Convergence Rate Results}

The following theorem establishes the convergences rates with a priori parameter choice.
\begin{thm}
\label{thm:aprior_conv} If assumptions \ref{as:model} and \ref{as:R} hold, and if $\alpha$ is chosen as
	\begin{equation}
		\alpha^{-1} \in -\partial(-\lambda)\left(\frac{1}{\sqrt{n}}\right),
		\label{eq:param_rule}
	\end{equation}
we obtain the following convergence rate
	\begin{equation*}
		\mathbbm{E}\big[D_{R}^{f_\bbeta^{*}}(\hat{f_\bbeta}_{\alpha},f_\bbeta)\big]= \mathcal{O}\left(\lambda\left(\frac{1}{\sqrt{n}}\right)\right), \;\; n \mapsto \infty
	\end{equation*}
\end{thm}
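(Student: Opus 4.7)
The plan is to combine the optimality of $\hat f_{\bbeta,\alpha}$ in \eqref{eqn:method} with the variational source condition in Assumption \ref{asmp:var_ineq}, then use the parameter choice \eqref{eq:param_rule} via a Fenchel--Young inequality, and finally control the remaining statistical noise with a uniform empirical process bound based on the Sobolev embedding in Assumption \ref{as:model}. To set up the bookkeeping, let $\ell(f) := \mathbbm{E}[\log Tf(\bTheta,S)]$ be the population log-likelihood and $\mathcal{S}(f) := \KL(f_{S|\bTheta}; Tf)$ the data discrepancy; since $Tf_\bbeta = f_{S|\bTheta}$, a direct calculation gives $\ell(f_\bbeta)-\ell(f) = \mathcal{S}(f)$. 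Writing $N_n(f) := [\bar\ell(f)-\ell(f)] - [\bar\ell(f_\bbeta)-\ell(f_\bbeta)]$ for the mean-zero empirical fluctuation, the defining property of the minimiser rearranges into the basic inequality $\mathcal{S}(\hat f_{\bbeta,\alpha}) + \alpha[\calR(\hat f_{\bbeta,\alpha})-\calR(f_\bbeta)] \le N_n(\hat f_{\bbeta,\alpha})$, and Assumption \ref{asmp:var_ineq} applied to the penalty difference upgrades this to
\[
\alpha c\, D_{\calR}^{f_\bbeta^{*}}(\hat f_{\bbeta,\alpha}, f_\bbeta) + \mathcal{S}(\hat f_{\bbeta,\alpha}) \le N_n(\hat f_{\bbeta,\alpha}) + \alpha\, \lambda\bigl(\mathcal{S}(\hat f_{\bbeta,\alpha})\bigr).
\]

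Next I would absorb the nonlinear $\lambda$ term. The rule \eqref{eq:param_rule} states that $\alpha^{-1}$ is a supergradient of the concave function $\lambda$ at $1/\sqrt n$, and concavity immediately gives $\alpha\lambda(t)-t \le \alpha\lambda(1/\sqrt n)-1/\sqrt n$ for every $t\ge 0$. Applying this at $t=\mathcal{S}(\hat f_{\bbeta,\alpha})$, cancelling the $\mathcal{S}$ term on both sides and dividing by $\alpha$ yields
\[
c\, D_{\calR}^{f_\bbeta^{*}}(\hat f_{\bbeta,\alpha}, f_\bbeta) \le \tfrac{1}{\alpha}\,N_n(\hat f_{\bbeta,\alpha}) + \lambda\bigl(1/\sqrt n\bigr) - \tfrac{1}{\alpha\sqrt n}.
\]
The deterministic part is already of the target order because the secant inequality for concave $\lambda$ with $\lambda(0)=0$ gives $1/(\alpha\sqrt n) \le \lambda(1/\sqrt n)$, so these two terms contribute $\mathcal{O}(\lambda(1/\sqrt n))$.

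The remaining and main obstacle is bounding $\mathbbm{E}[N_n(\hat f_{\bbeta,\alpha})]/\alpha$, an empirical process evaluated at a random element. I would restrict to a sublevel set $\mathcal{F}_M := \{f\in B : \calR(f)\le M\}$: by Assumption \ref{as:model}, $T(\mathcal{F}_M)$ sits in a bounded ball of $H^s(D)$ with $s>(d+1)/2$, and the Sobolev embedding $H^s(D)\hookrightarrow C(\overline D)$ provides a uniformly bounded Donsker-type class of integrands $\log Tf$, once one truncates $Tf$ from below at some $\varepsilon_n\downarrow 0$ whose induced bias is absorbed into $\lambda(1/\sqrt n)$. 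A standard chaining or Talagrand bound then gives $\mathbbm{E}[\sup_{f\in\mathcal{F}_M}|N_n(f)|] = \mathcal{O}(1/\sqrt n)$. To justify $\hat f_{\bbeta,\alpha}\in \mathcal{F}_M$ with high probability, one either reads an a priori bound on $\calR(\hat f_{\bbeta,\alpha})$ directly off the basic inequality (using $\mathcal{S}\ge 0$) or runs a peeling argument over dyadically growing $M$. Combining with the previous display, taking expectations, and once more using $1/(\alpha\sqrt n) \le \lambda(1/\sqrt n)$, delivers the claimed rate $\mathcal{O}(\lambda(1/\sqrt n))$. The subtle point is the integrability of $\log Tf$ near the boundary where $Tf$ vanishes; the Sobolev smoothness built into Assumption \ref{as:model} is precisely what tames this singularity and keeps us on a Donsker class.
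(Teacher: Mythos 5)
Your deterministic skeleton is exactly the standard argument that the paper itself does not spell out but outsources to Theorem~4.3 of the cited Werner--Hohage reference: the basic inequality from minimality, the insertion of the variational source condition, the supergradient (Fenchel--Young) step that converts $\alpha\lambda(\mathcal{S})-\mathcal{S}$ into $\alpha\lambda(n^{-1/2})-n^{-1/2}$, and the secant bound $1/(\alpha\sqrt n)\le\lambda(1/\sqrt n)$ are all correct and complete as written. So on that portion you have reconstructed the proof the paper merely cites.

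The genuine gap is in the stochastic step, which is where all the real content of this theorem lives. First, the claim $\mathbbm{E}\bigl[\sup_{f\in\mathcal{F}_M}|N_n(f)|\bigr]=\mathcal{O}(1/\sqrt n)$ requires the integrands $\log Tf$ to form a uniformly bounded (or at least manageable) class, and the Sobolev embedding $H^s(D)\hookrightarrow C(\overline D)$ only gives you an upper bound on $Tf$; it gives no lower bound, and $\log Tf$ is unbounded below wherever $Tf$ is small. Your proposed fix --- truncate $Tf$ at $\varepsilon_n\downarrow 0$ and claim the bias is ``absorbed into $\lambda(1/\sqrt n)$'' --- is not an argument: the truncation changes both the estimator's objective and the population functional, and quantifying the resulting perturbation of $\mathcal{S}$ and of the minimiser requires either a uniform lower bound $Tf\ge\rho>0$ on the support of $(S,\bTheta)$ or an explicit offset in the likelihood (this is precisely what the deviation-inequality hypotheses in the cited reference encode, and what your sketch silently assumes). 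Second, the localization $\hat f_{\bbeta,\alpha}\in\mathcal{F}_M$ is circular as you present it: the basic inequality gives $\calR(\hat f)\le\calR(f_\bbeta)+N_n(\hat f)/\alpha$, so you need the uniform bound on $N_n$ before you can restrict to $\mathcal{F}_M$, and you need $\mathcal{F}_M$ before you can prove the uniform bound; the peeling argument you mention in passing is the actual work and must be carried out. Until those two points are supplied, the expectation bound $\mathbbm{E}[N_n(\hat f_{\bbeta,\alpha})]/\alpha=\mathcal{O}(\lambda(1/\sqrt n))$ is asserted rather than proved.
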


The last theorem gives convergence rates in terms of the Bregman distance with respect to $\calR$. This is an unusual convergence measure but it comes with easy interpretations in many cases. If for example $\calR$ is maximum entropy regularization, the Bregman distance dominates the $L^1$ norm $ D_{R}^{f_\bbeta^{*}}(f,f_\bbeta) > \|f-f_\bbeta\|_{L^1}$. Hence, Theorem \ref{thm:aprior_conv} implies convergence rates for the $L^1$ norm. If $\calR$ is a quadratic penalty like $\calR = \|\cdot\|_{L^2}^2$ or $\calR = \|\cdot\|_{H^a}^2$, the Bregman distance coincides with the squared norm $ D_{R}^{f_\bbeta^{*}}(f,f_\bbeta) > \|f-f_\bbeta\|_{H^a}^2$. This leads to the following lemma.

\begin{thm}\label{thm:apriori}
Let the assumptions of Theorem \ref{thm:sc} hold and use the a priori parameter choice $\alpha = \frac{2(r-a)+d-1}{2(r-a)	}n^\frac{1-d}{4(r-a)+2(d-1)}$. Then
\[
\mathbbm{E} \big[ \|\hat{f_\bbeta}_{\alpha}-f_\bbeta\|_{H^a}^2\big] = \mathcal{O}\left( n^{-\frac{r-a}{2(r-a)+d-1}} \right).
\]
\end{thm}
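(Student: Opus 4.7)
The plan is to derive Theorem \ref{thm:apriori} as an immediate corollary of Theorem \ref{thm:sc} combined with Theorem \ref{thm:aprior_conv}. Theorem \ref{thm:sc} supplies a variational source condition (Assumption \ref{asmp:var_ineq}) with an explicit, concrete index function tailored to the quadratic Sobolev penalty $\calR=\|\cdot\|_{H^a}^{2}$, while Theorem \ref{thm:aprior_conv} converts any such source condition into a Bregman convergence rate under the a priori parameter choice \eqref{eq:param_rule}. The only real work is to instantiate each step with the specific exponents in the statement.

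First I would invoke Theorem \ref{thm:sc} with $\calR=\|\cdot\|_{H^a}^{2}$ and $0<r-a\le d-1$, producing Assumption \ref{asmp:var_ineq} with index function $\lambda(x)=x^{\theta}$ where $\theta:=\frac{2(r-a)}{2(r-a)+d-1}\in(0,1)$. Next I would verify that the $\alpha$ given in Theorem \ref{thm:apriori} is exactly the parameter produced by rule \eqref{eq:param_rule}. Since $\lambda$ is differentiable on $(0,\infty)$, the set $-\partial(-\lambda)(x)$ reduces to the singleton $\{\theta\, x^{\theta-1}\}$, so \eqref{eq:param_rule} evaluated at $x=1/\sqrt n$ gives $\alpha^{-1}=\theta\, n^{(1-\theta)/2}$. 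Using $\theta^{-1}=\frac{2(r-a)+d-1}{2(r-a)}$ and
\[
\frac{\theta-1}{2}=-\frac{d-1}{2[2(r-a)+d-1]}=\frac{1-d}{4(r-a)+2(d-1)},
\]
one recovers precisely the $\alpha$ appearing in the statement.

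With the hypotheses of Theorem \ref{thm:aprior_conv} verified, that theorem yields
\[
\mathbbm{E}\bigl[D_{\calR}^{f_\bbeta^{*}}(\hat{f_\bbeta}_{\alpha},f_\bbeta)\bigr]
=\mathcal{O}\!\left(\lambda(1/\sqrt{n})\right)
=\mathcal{O}\!\left(n^{-\theta/2}\right)
=\mathcal{O}\!\left(n^{-\frac{r-a}{2(r-a)+d-1}}\right).
\]
Finally, because $\calR=\|\cdot\|_{H^a}^{2}$ is quadratic on the Hilbert space $H^a$, the subgradient at $f_\bbeta$ can be taken to be $f_\bbeta^{*}=2f_\bbeta$ (via the Riesz identification), and a one-line expansion gives the Bregman identity $D_{\calR}^{f_\bbeta^{*}}(f,f_\bbeta)=\|f-f_\bbeta\|_{H^a}^{2}$. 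Substituting this identity into the display above delivers the claimed $H^a$ rate. I do not anticipate any serious obstacle: the argument is essentially bookkeeping. The only points requiring care are tracking the algebra that converts the exponent $\theta$ into the $\alpha$ stated in the theorem, and confirming that the subgradient produced in the proof of Theorem \ref{thm:sc} is compatible with the choice $f_\bbeta^{*}=2f_\bbeta$ used in the Bregman identity.
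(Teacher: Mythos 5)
Your proposal is correct and follows essentially the same route as the paper's own proof: instantiate the source condition from Theorem \ref{thm:sc} with $\lambda(x)=x^{\frac{2(r-a)}{2(r-a)+d-1}}$, note that differentiability reduces $-\partial(-\lambda)$ to $\lambda'$ so that rule \eqref{eq:param_rule} at $x=n^{-1/2}$ yields exactly the stated $\alpha$, then apply Theorem \ref{thm:aprior_conv} and the quadratic-penalty identity $D_{\calR}^{f_\bbeta^{*}}(f,f_\bbeta)=\|f-f_\bbeta\|_{H^a}^{2}$. Your write-up merely spells out the exponent bookkeeping and the Bregman identity that the paper leaves implicit.
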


Note that the theorem proves convergence of the mean integrated squared error for $a=0$. Hence, $MISE(\hat{f_\beta}_{\alpha}) = \mathcal{O}\big( n^{-\frac{r}{2r+d-1}} \big)$ with parameter choice $\alpha = c n^\frac{1-d}{4(r-a)+2(d-1)}$ with some constant $c>0$.

The choice of $\alpha$ in Theorems \ref{thm:aprior_conv} and \ref{thm:apriori} requires a priori knowledge about the smoothness of the unknown $f_\bbeta$. This is usually not available in practice. A popular and computationally inexpensve data-driven parameter choice for $\alpha$ is Lepskii's principle as in \cite{Lepskii:92b}, \cite{mathe_2006}, and \cite{hohage2016inverse}. For sake of simplicity we state the result only for quadratic penalty in Hilbert spaces $\calR=\|\cdot\|$. 

The principle works as follows: we compute $\hat{f_\bbeta}_{\alpha_{1}},\ldots,\hat{f_\bbeta}_{\alpha_{m}}$ for $\alpha_1 = c_L \frac{\ln(n)}{\sqrt{n}}$ and $\alpha_{i+1} = r\alpha_i$ with some constants $c_L>0,r>1$.

\[
\bar{j}:=\min\{j\leq m, \|\hat{f_\bbeta}_{\alpha_i}-\hat{f_\bbeta}_{\alpha_j}\| \leq 8 r^\frac{1-i}{2}), \text{ for all } i < j\}.
\]

\begin{thm}\label{thm:lepskii}
Let $\calR = \|\cdot\|_{H^a}^2$ be a Sobolev regularization functional. If Assumption \ref{asmp:var_ineq} holds with $c\ge\frac{1}{2}$, then with the Lepskii parameter choice $\alpha_{\bar j}$
\begin{equation}
	\mathbbm{E}[\|\hat{f_\bbeta}_{\alpha_{\bar j}} - f_\bbeta\|_{H^a}^2]= \mathcal{O}\left(\lambda\left(\frac{\ln(n)}{\sqrt{n}}\right)\right).
	\label{eq: lep_conv}
\end{equation}
If in addition the assumptions of Theorem \ref{thm:sc} hold, then
\[
\mathbbm{E} \big[ \|\hat{f_\bbeta}_{\alpha}-f_\bbeta\|_{H^a}^2\big] = \mathcal{O}\left(\ln(n)^{\frac{2(r-a)}{2(r-a)+d-1}}~ n^{-\frac{r-a}{2(r-a)+d-1}} \right).
\]
\end{thm}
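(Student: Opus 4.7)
The strategy is the standard Lepskii balancing argument specialized to the convex Tikhonov setting of \eqref{eqn:method}, exploiting that $\calR=\|\cdot\|_{H^a}^2$ is a Hilbert-space square norm so that the Bregman distance $D_\calR^{f_\bbeta^{*}}(\cdot,f_\bbeta)$ coincides with $\|\cdot - f_\bbeta\|_{H^a}^2$ up to a constant (here the hypothesis $c\ge 1/2$ is what gives clean constants) and genuine triangle inequalities are available. The first step is to upgrade the in-expectation conclusion of Theorem \ref{thm:aprior_conv} to a high-probability bound
\[
\|\hat{f_\bbeta}_{\alpha_j} - f_\bbeta\|_{H^a}^2 \le C\Bigl(\alpha_j + \lambda\bigl(\tfrac{t}{\sqrt{n}\,\alpha_j}\bigr)\Bigr),
\]
valid simultaneously for all $j\le m$ with probability at least $1-me^{-t^2}$. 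The input here is a concentration inequality for the empirical log-likelihood \eqref{eqn:avg_lhood} around its expectation: by Assumption \ref{as:model} and Sobolev embedding $H^s\hookrightarrow C^0$ for $s>(d+1)/2$, $f_{S|\bTheta}=Tf_\bbeta$ is bounded, so the summands in \eqref{eqn:avg_lhood} are bounded and a Hoeffding plus union bound over the $m=O(\ln n)$-long geometric grid produce the $\ln n$ factor that propagates to the final rate.

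The second step is the oracle balancing. Let $j_\star$ denote the smallest grid index for which $\alpha_{j_\star}$ realizes the a priori rule \eqref{eq:param_rule} at effective noise level $\ln(n)/\sqrt{n}$; by Theorem \ref{thm:aprior_conv} the bias contribution $\|\hat{f_\bbeta}_{\alpha_{j_\star}}-f_\bbeta\|_{H^a}^2$ is of order $\lambda(\ln(n)/\sqrt{n})$. On the high-probability event just constructed, two triangle inequalities in $H^a$ give $\|\hat{f_\bbeta}_{\alpha_i}-\hat{f_\bbeta}_{\alpha_{j_\star}}\|_{H^a}\le 8 r^{(1-i)/2}$ for every $i\le j_\star$, which is exactly the admissibility criterion of the Lepskii rule; consequently the selected index satisfies $\bar j\ge j_\star$. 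Applying the defining stopping inequality at the pair $(j_\star,\bar j)$ and then the a priori bound at $\alpha_{j_\star}$ yields
\[
\|\hat{f_\bbeta}_{\alpha_{\bar j}}-f_\bbeta\|_{H^a}^2 \le 2\|\hat{f_\bbeta}_{\alpha_{\bar j}}-\hat{f_\bbeta}_{\alpha_{j_\star}}\|_{H^a}^2 + 2\|\hat{f_\bbeta}_{\alpha_{j_\star}}-f_\bbeta\|_{H^a}^2 = \mathcal{O}\bigl(\lambda(\ln(n)/\sqrt{n})\bigr).
\]
The passage from this high-probability statement to the expectation in \eqref{eq: lep_conv} is then routine: choose $t\asymp\sqrt{\ln n}$ so the exceptional event has polynomially small probability, and note that $\calR(\hat{f_\bbeta}_{\alpha_j})\le \calR(f_\bbeta)+o(1)$ by the defining minimization, which gives a deterministic uniform $H^a$-bound on the estimators and makes the contribution of the exceptional event negligible.

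The explicit second assertion of the theorem follows by substituting the index function $\lambda(x)=x^{2(r-a)/(2(r-a)+d-1)}$ supplied by Theorem \ref{thm:sc} into $\lambda(\ln(n)/\sqrt{n})$. The hard part is the first step: promoting the in-expectation analysis of Theorem \ref{thm:aprior_conv} into a uniform exponential deviation inequality for the \emph{constrained minimizer} $\hat{f_\bbeta}_{\alpha}$, not merely for the log-likelihood functional. Doing this requires combining deterministic stability of the Tikhonov map (obtained from the variational source condition together with the strict convexity induced by the $H^a$-penalty) with a Bernstein-type concentration bound for the empirical data-fidelity term, and it is at this point that the hypothesis $c\ge 1/2$ and the Sobolev regularity of $Tf_\bbeta$ from Assumption \ref{as:model} do most of the work.
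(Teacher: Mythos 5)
Your overall architecture --- a deterministic error decomposition for the variational estimator, a uniform-in-$j$ deviation bound, the oracle-index balancing argument, and finally the substitution of $\lambda(x)=x^{2(r-a)/(2(r-a)+d-1)}$ --- is the standard Lepskii machinery, and it is essentially what the paper itself invokes: its entire proof of the first assertion is a citation of Theorem 5.1 in \cite{WH:12}, and the second assertion is exactly your last paragraph (plugging the index function from Theorem \ref{thm:sc} into $\lambda(\ln(n)/\sqrt{n})$). So on the balancing and substitution steps you and the paper agree; you are simply trying to open the black box that the paper leaves closed.

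The genuine gap is in your first step, which you correctly identify as the hard part but then dispatch with the wrong tool. ``Hoeffding plus a union bound over the $m=O(\ln n)$-long grid'' fails for two reasons. First, the summands $\log\bigl((Tf)(\bTheta_i,S_i)\bigr)$ in \eqref{eqn:avg_lhood} are bounded above when $Tf$ is bounded but not below (they diverge where $Tf$ is small), so Hoeffding does not apply to the log-likelihood; concentration for a Kullback--Leibler-type data fidelity needs a Bernstein-type argument with truncation, which is precisely the technical content of the cited reference. Second, and more fundamentally, a union bound over the grid of regularization parameters does not deliver the uniformity you need: the estimators $\hat{f_\bbeta}_{\alpha_j}$ are data-dependent, so the error term entering the variational decomposition is a supremum of the centered empirical fidelity over the whole admissible class $B$, not an evaluation at $m$ fixed functions. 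Controlling that supremum is where the mapping property $T:B\to H^s(D)$ with $s>\frac{d+1}{2}$ in Assumption \ref{as:model} actually does its work (via an empirical-process bound in a negative Sobolev norm), not merely in making $f_{S|\bTheta}$ bounded. Two smaller inaccuracies: for $\calR=\|\cdot\|_{H^a}^2$ the Bregman distance equals $\|f-f_\bbeta\|_{H^a}^2$ exactly, and the hypothesis $c\ge\tfrac12$ is needed for the constants in the Lepskii comparison inequalities rather than to relate the Bregman distance to the norm; and your intermediate bound $C\bigl(\alpha_j+\lambda(t/(\sqrt{n}\,\alpha_j))\bigr)$ has the wrong shape --- the standard decomposition is an $\mathrm{err}/\alpha_j$ variance term plus a conjugate-function bias term of the form $(-\lambda)^*(-1/\alpha_j)$, which collapses to $\lambda(\mathrm{err})$ only after balancing.
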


%
%

As usual we lose a log-factor for adaptive estimation. Other \textit{a posteriori} methods might perform better than Lepskii's principle e.g. cross-validation. However, these methods are computationally expensive and not feasible in our setup. Hence, we favor Lepskii's principle for its simplicity and computational efficiency.

\section{Implementation}\label{sec:implementation}
\subsection{Discretization}

We discretized $f_\beta$ over a fixed regular grid with $m$ grid points. Applying this discretization allowed for the evaluation of the integral expression $\int_{\mathbbm{R}}\mathbbm{1}\big\{\bbeta_i^\top\bX_i=Y_i\big\}f_\bbeta(\bb)d\bb$ using a finite volume approach. This integral expression is simply the integral of the joint density $f_\beta$ over the $d$-dimensional hyperplanes parametrized by the sample points $Y_i=\bb_i^\top\bX_i$. In this implementation, the model was specified to have one regressor in which case the hyperplanes are simply lines.

The amount of information that can be known about the shape of $f_\bbeta$ relies on the amount of coverage that the lines parametrized by the sample observations have over the estimation grid. Figure 1 illustrates this concept for the case of the random coefficients model with one regressor and random slope, with random design.

\begin{equation}
	y_i={\beta_0}_i+{\beta_1}_{i}{x_1}_i
	\label{eqn:rc_model_2d}
\end{equation}

It is clear from equation \eqref{eqn:rc_model_2d} that the coverage of the lines over the estimation grid depends highly on the design density as it specifies the slopes of the different lines parametrized by the sample. In section \ref{subsec: Bounded} we will discuss how our method is more robust with respect to poor angle coverage which can be brought about by either undesired tail behavior in the distribution of the regressor, or the said regressor having too narrow of a support.

\begin{figure}[ht!]
	\centering
	\includegraphics[width=90mm]{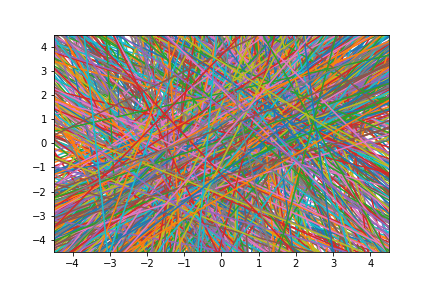}
	\caption{Example line coverage of the estimation grid. Lines are parametrized as $Y_i=\bb^\top \bX_i$, or by $\beta_0=y_i-\beta_1 x_{1_i}$. \label{fig:full_angle}}
\end{figure}

\subsection{Optimization of the Likelihood Functional}

The discretized likelihood functional is a $m$-dimensional function, ${\bar{\ell}}^{*}: \mathbbm{R^{m}}\mapsto\mathbbm{R}$. We denote the discretized form of the regularization functionals as, ${\calR}^{*}({f^{*}}): \mathbbm{R^{m}}\mapsto\mathbbm{R}$, where $f^{*} = \{f_{j}\}_{j=1}^{m}$

The nonlinear program for the minimization of the conditional average log-likelihood functional was setup as follows:\\

\begin{equation}
\hat{f_\bbeta} = \argmin_{f^{*}}-{\bar{\ell}^{*}}(f^{*}|X,Y)+\alpha \calR^{*}(f^{*})
\end{equation}

{%
	\centering
	$\begin{array}{rcl}
	\text{subject to}: \sum_{j=1}^{m}f_{j}\Delta b=1,
	\\f_{1},f_{2},\ldots,f_{m} \geq 0.\\
	\\
	\end{array}$
	\\
}

The convexity of the problem allows us to use a trust-region constrained algorithm based on an interior point algorithm for constrained large-scale nonlinear programs developed in \cite{byrd1999interior}. This particular algorithm excels in solving high-dimensional minimization problems, which is the case for the implementation as $f_\beta$ can scale up to the order of thousands in terms of dimensionality. We used the Python implementation of the algorithm in \cite{byrd1999interior}.

\section{Monte Carlo Simulations}\label{sec:simulation}

We checked the small sample performance of the method in Monte Carlo simulations with two different test examples, one with a normally distributed regressor, and one with a regressor which has bounded support. We compared the results to a kernel-based estimator akin to the one proposed in \cite{hoderlein2010}. 
%
We found that the kernel based estimator in our implementation required us to
 cut-off negative point-estimates and to normalize the final reconstruction, $\hat{f_\beta}$, otherwise it produced very poor results.

To test the performance of both methods we conducted a simulation study by generating data for the Random Coefficients model with random slope, and one regressor as specified by equation \ref{eqn:rc_model_2d}. We then evaluated its ability to reconstruct the underlying distribution of the random coefficients by comparing it to the known true distribution through measuring the empirical mean integrated squared error (MISE) across one-hundred Monte Carlo simulation runs which we repeat for the different sample sizes. We also considered in our study the variance of the integrated squared error (ISE), as we believe stability of the reconstructions is as important to consider as their accuracy.

\subsection{Simulation Results for Unbounded Support}\label{subsec: Unbounded}

We compared the ability of both estimators to reconstruct $f_\bbeta$ in the case where the design density has unbounded support. Figure \ref{fig:full_angle} in Section \ref{sec:implementation} illustrates the case for the full angle coverage.

The data used for these set of simulations was constructed by drawing $X_1$ from a normal distribution, $X_1 \sim \mathcal{N}(0,1)$. The coefficients $\beta_0$ and $\beta_1$ were sampled from a bimodal, bivariate normal mixture distribution specified as,
\begin{equation}
0.5\mathcal{N}([-1.5,-1.5],\Sigma)+0.5\mathcal{N}([1.5,1.5],\Sigma).
\label{eq:beta_dist}
\end{equation}

\noindent with the simple covariance structure where $\Sigma$ is the identity matrix. Lastly, the variable $Y$ was created using the model equation $y_i={\beta_0}_i+{\beta_1}_i {x_1}_i$, hence producing the simulated dataset. Figure \ref{fig:ftheo} shows the shape of the density $f_\bbeta$.


\begin{figure}[ht!]
	\centering
		\includegraphics[width=90mm]{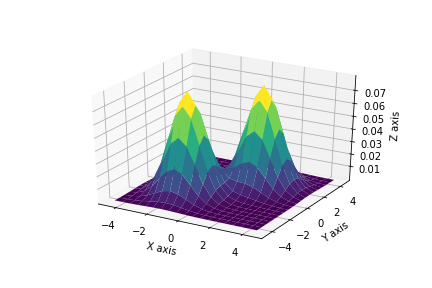}
	\caption{Theoretical $f_\beta$. \label{fig:ftheo}}
\end{figure}

\subsubsection{Simulation Results}

The simulation study was done with sample sizes $n =$ \{500, 1000, 1500, 3000, 10000\}. We used an equidistant grid on the domain $[-4.5,4.5] \times [-4.5,4.5]$ with $m = 361$ grid points for $\hat{f_\bbeta}$.

As stated in Section \ref{sec:convergence} we select the parameter $\alpha$ for the RMLE through Lepskii's principle. 
%
However, this parameter choice method did not work well for the bandwidth of the kernel method. So instead we present the results from choosing the reconstruction that has the smallest ISE for each simulation assuming that we have a priori knowledge of the true distribution $f_\bbeta$. This means that when comparing the simulation results between the two methods, the kernel-method has a significant advantage over RMLE. 

Table \ref{tab:unbounded_results} contains the  statistics 100 Monte Carlo simulation results.

\begin{table}[h]
	\caption{Monte Carlo Simulation Results for Unbounded Support}
	\centering 
	\begin{tabular}{c rrrr} 
		\hline\hline \\ [-1.5ex]
		\multicolumn{1}{}{} & \multicolumn{2}{c}{RMLE} & \multicolumn{2}{c}{kernel Method}\\ \cline{2-3} \cline{4-5}\\ [-2ex]
		n & MISE & Variance (ISE) & MISE & Variance (ISE)\\ [0.5ex]
		\hline 
		500 & 4.043e-05 & 2.171e-11 & 9.516e-06 & 3.361e-12 \\ 
		1,000 & 2.201e-05 & 1.566e-10 &6.798e-06 & 1.613e-12\\
		1,500 & 1.928e-05 & 1.751e-10 &5.720e-06 & 8.287e-12\\
		3,000 & 1.399e-05 & 4.166e-11 & 4.429e-06& 3.675e-13\\
		10,000 & 9.610e-06 & 1.926e-12 & 3.536e-06 & 8.038e-14\\
	\hline 
	\end{tabular}
\label{tab:unbounded_results}
\end{table}

In the case of the RMLE method the MISE, as well as the variance of the ISE, displayed a decreasing trend as sample size increased. The same is true for the kernel method, as a similar decreasing trend is observed for both the MISE and the variance.

Based on the simulation results, the kernel method seems to outperform the RMLE method for the unbounded case with the caveat that these results for the kernel-method were obtained using a superior \emph{a priori} parameter choice, while the RMLE method used a data-driven parameter choice for $\alpha$.

Figures \ref{fig:mle_unbounded} and \ref{fig:kernel_unbounded} are reconstructions using the RMLE method and the kernel method respectively for the unbounded case. The ISE for these reconstructions are 9.136e-06 and 3.149e-06 respectively. It is clear that both methods work well in the case where the support of the distribution of $X_1$ is unbounded. The contour plots in Figures \ref{fig:mle_unbounded_cont} and \ref{fig:kernel_unbounded_cont} of typical reconstructions suggest the same result.

\begin{figure}[h]
	\centering
	\begin{minipage}{0.4\textwidth}
		\centering
		\includegraphics[width=1.1\textwidth]{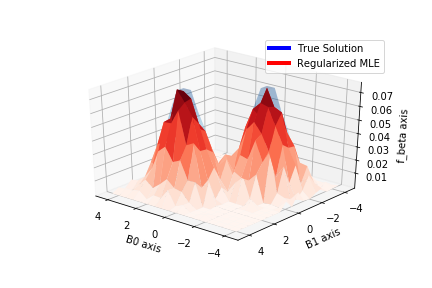} 
		\caption{$\hat{f_\beta}$ estimate for $n=10,000$ using RMLE}\
		\label{fig:mle_unbounded}
	\end{minipage}\hfill
	\begin{minipage}{0.4\textwidth}
		\centering
		\includegraphics[width=1.1\textwidth]{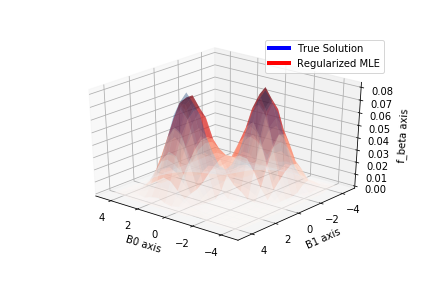} 
		\caption{$\hat{f_\beta}$ estimate for $n=10,000$ using the kernel method}
		\label{fig:kernel_unbounded}
	\end{minipage}
\end{figure}

\begin{figure}[h]
	\centering
	\begin{minipage}{0.45\textwidth}
		\centering
		\includegraphics[width=1.1\textwidth]{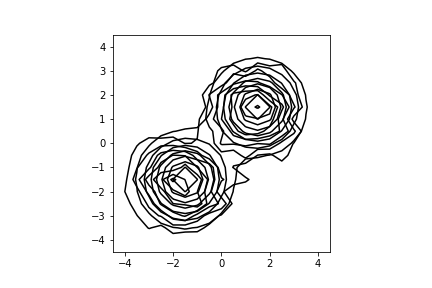} 
		\caption{$\hat{f_\beta}$ contour using RMLE}\
		\label{fig:mle_unbounded_cont}
	\end{minipage}\hfill
	\begin{minipage}{0.45\textwidth}
		\centering
		\includegraphics[width=1.1\textwidth]{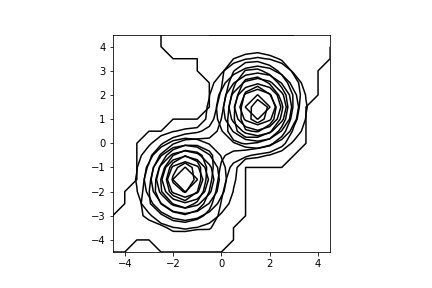} 
		\caption{$\hat{f_\beta}$ contour using the kernel method}
		\label{fig:kernel_unbounded_cont}
	\end{minipage}
\end{figure}
\subsection{Simulation Results for Bounded Support}\label{subsec: Bounded}

To illustrate the advantage of using RMLE over its kernel-based counterpart we conducted simulations for the case where the design density has bounded support.

To generate the data, the regressor $X_1$ was sampled from a uniform distribution on $\small[-0.8,0.8\small]$. The coefficients $\beta_0$ and $\beta_1$ were sampled from the same bimodal, bivariate normal mixture distribution specified in equation \eqref{eq:beta_dist}. Figure \ref{fig:lim_angle} illustrates the limited angle problem caused by the boundedness of $X_1$.

\begin{figure}[h]
	\centering
		\includegraphics[width=70mm]{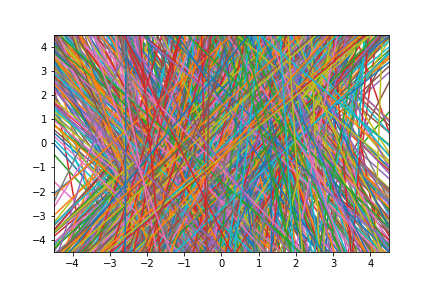}
	\caption{Limited angle coverage of the estimation grid. Lines are parametrized as $Y_i=\bbeta^\top \bX_i$, or by $\beta_0=y_i-\beta_1 x_{1_i}$. \label{fig:lim_angle}}
\end{figure}

\subsubsection{Simulation Results}
The simulation study was done with sample sizes $n =$ \{500, 1000, 1500, 3000, 10000\} and on the same grid for $\hat{f_\bbeta}$ as above. Table \ref{tab:bounded_results} contains the statistics for both estimators based on 100 Monte Carlo simulations for each sample sizes.

\begin{table}[h]
	\caption{Monte Carlo Simulation Results for Bounded Support}
	\centering 
	\begin{tabular}{c rrrr} 
		\hline\hline \\ [-1.5ex]
		\multicolumn{1}{}{} & \multicolumn{2}{c}{RMLE} & \multicolumn{2}{c}{Kernel Method}\\ \cline{2-3} \cline{4-5}\\ [-2ex]
		n & MISE & Variance (ISE) & MISE & Variance (ISE)\\ [0.5ex]
		\hline 
		500 & 4.333e-05 & 1.926e-12 & 3.320e-05 & 4.9868e-12 \\ 
		1,000 & 3.367e-05 & 8.761e-11 &3.173e-05 & 2.582e-12\\
		1,500 & 3.007e-05 & 8.292e-11 &3.089e-05 & 2.121e-12\\
		3,000 & 2.874e-05 & 5.730e-11 & 3.009e-05 & 1.229e-12\\
		10,000 & 1.689e-05 & 2.172e-11 & 2.922e-05 & 5.366e-13\\
		\hline 
	\end{tabular}
	\label{tab:bounded_results}
\end{table}

The RMLE method shows a decreasing trend in the MISE and the variance of the ISE as the sample size increases. It is noteworthy that the MISEs produced by the RMLE method in the bounded case are not much different to the ones the method produced in the unbounded case.

However, the same cannot be said for the kernel-method. Although it has a smaller variance compared to the RMLE method, the MISEs only decrease marginally as the sample size increases. When compared to the results for the unbounded case, there is a significant difference in the size of the errors produced by the method.

In the case where the design density has bounded support, it is apparent that the Regularized MLE estimator performs considerably better than its kernel-based counterpart in terms of accuracy measured by MISE even with the RMLE method using a data-driven parameter choice, and the kernel-method using an \emph{a priori} choice for the bandwidth parameter.

Figures \ref{fig:mle_bounded} and \ref{fig:kernel_bounded} show typical density reconstructions using RMLE and the kernel method with $n=10,000$. The ISEs for the reconstructions were 1.309e-05 and 2.872e-05 respectively. While the kernel method produces a much smoother reconstruction, it suffers from the limited angle problem as seen in the ridging effect that manifests along the pre-dominant directions established by the lines parametrized by the sample. Figures \ref{fig:mle_bounded_cont} and \ref{fig:kernel_bounded_cont} show the contours of the density reconstructions, $\hat{f}_\bbeta$.

\begin{figure}[h]
	\centering
	\begin{minipage}{0.4\textwidth}
		\centering
		\includegraphics[width=1.1\textwidth]{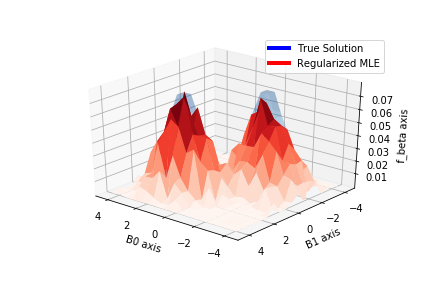} 
		\caption{$\hat{f_\beta}$ estimate for $n=10,000$ using RMLE}\
		\label{fig:mle_bounded}
	\end{minipage}\hfill
	\begin{minipage}{0.4\textwidth}
		\centering
		\includegraphics[width=1.1\textwidth]{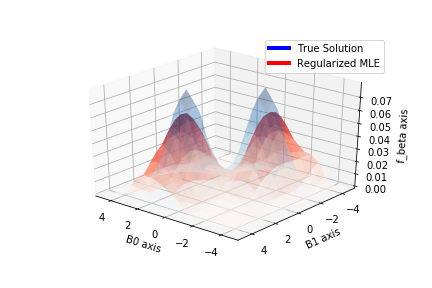} 
		\caption{$\hat{f_\beta}$ estimate for $n=10,000$ using the kernel method}
		\label{fig:kernel_bounded}
	\end{minipage}
\end{figure}

As stated previously, the parameter $\alpha$ can either be chosen by the statistician or by a data-driven parameter choice rule. If more smoothness in the reconstruction is desired it can be achieved by choosing a larger $\alpha$, or by choosing a regularization term that imposes greater smoothness constraints. 

\begin{figure}[h]
	\centering
	\begin{minipage}{0.45\textwidth}
		\centering
		\includegraphics[width=1.1\textwidth]{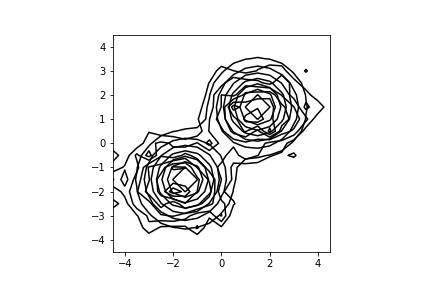} 
		\caption{$\hat{f_\beta}$ contour using RMLE}\
		\label{fig:mle_bounded_cont}
	\end{minipage}\hfill
	\begin{minipage}{0.45\textwidth}
		\centering
		\includegraphics[width=1.1\textwidth]{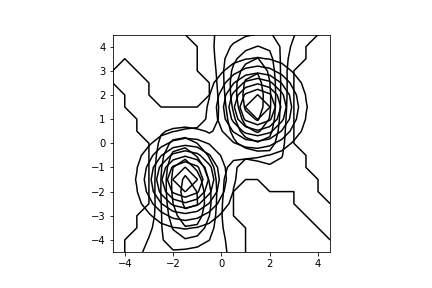} 
		\caption{$\hat{f_\beta}$ contour using the kernel method}
		\label{fig:kernel_bounded_cont}
	\end{minipage}
\end{figure}
 
We can conclude that the RMLE method works well in both cases, whereas the kernel method suffers from the problem posed by limited-angle coverage. This robustness with respect to the nature of the data allows us to work with a wider range of data without any complications; furthermore, data with limited support are quite typical in applications for these methods.

\section{Application}\label{sec:application}

For a real data application of the estimator we use the data set from \cite{Dunker2019}, which are data from the British Family Expenditure Survey from 1997 to 2001 with a sample size of about 33000. 
The variables we are concerned with being the budget share of food, the log of total expenditure, and the log of food prices. The same transformation on the data done in \cite{Dunker2019} was applied as it leads to more coverage over the estimation grid which improves the estimator's performance. The model we apply to the data is is motivated by the almost ideal demand system \cite{Deaton:80}:
\begin{equation}
BS_i = {\beta_1}_i \ln(Total Expenditure_i) + {\beta_2}_i \ln(Food Prices_i) + \epsilon_i
\label{eq:aids_model}
\end{equation}

The Ordinary Least-Squares (OLS) results for the model return fixed coefficients as $ (\beta_1,\beta_2) = (0.085,0.017)$. Figures \ref{fig:mle_real}, and \ref{fig:mle_real_cont} show the results using the RMLE method for the same data and model. The results from our estimator suggest that the peak, of $\hat{f}_{\beta_1,\beta_2}$ occurs close to $(0,0)$. 

While the OLS results for the fixed coefficients is consistent with what our estimator suggests to be the location of the peak, what it fails to capture is the unobserved heterogeneity present in the population. And, as the result from our estimator shows there is noticeable spread in the estimates for $\beta_1$, and $\beta_2$ which suggests that a random coefficient model might be a better choice than the fixed coefficient model.

\begin{figure}[h]
	\centering
	\begin{minipage}{0.4\textwidth}
		\centering
		\includegraphics[width=1.1\textwidth]{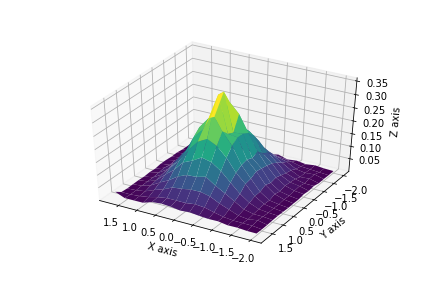} 
		\caption{$\hat{f}_{\beta_1,\beta_2}$ estimate.}\
		\label{fig:mle_real}
	\end{minipage}\hfill
	\begin{minipage}{0.4\textwidth}
		\centering
		\includegraphics[width=1.1\textwidth]{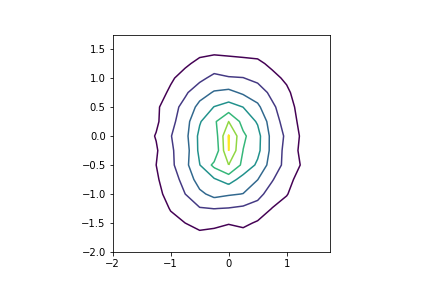} 
		\caption{Contour of $\hat{f}_{\beta_1,\beta_2}$}
		\label{fig:mle_real_cont}
	\end{minipage}
\end{figure}

\section{Conclusion}\label{sec:conclusion}
We provide an alternative method of estimating the joint distribution of $\bbeta$ through Regularized Maximum Likelihood Estimations (RMLE). To combat the instability of the estimate caused by the implicit inversion of the Radon Transform, we apply Tikhonov-type regularization methods.

Compared to its well-known counter part, the kernel method, the RMLE method is more robust with respect to tail behavior of the design density. Although more robust, the method is not completely unaffected by the behavior of the design density, as the rate of decay of the singular values of the operator,  and consequently the degree of ill-posedness of the problem, $T$ is determined by the distribution of the regressors.

Monte Carlo simulations illustrate that both methods work well in a scenario where the regressor has unbounded support, but in the case where the regressor has bounded support the RMLE method produces more accurate reconstructions. 


\bibliographystyle{apalike}
\bibliography{RC_MLE_Paper}

\appendix
\section{Proofs}
\subsection{Proofs of Section \ref{sec:convergence}}

\begin{proof}[Proof of Theorem \ref{thm:exist} ]

We apply Proposition 4.2 (i) in \cite{hohage2016inverse} for the existence of a minimizer. We have to check the conditions of the theorem. $T$ is a linear integral operator with square integrable which makes it a compact operator. In addition, $-\bar{\ell}(.)$ and $\mathcal{R}: \mathcal{X} \mapsto \mathbbm{R}$  are both convex and lower semi-continuous. Futhermore, $Tf_\beta \geq 0 \; \text{for all} \; f_\beta \in U$. Hence, the conditions of Proposition 4.2 (i) in \cite{hohage2016inverse} are fulfilled.

The uniqueness of the minimizer follows from the strict convexity of $-\bar{\ell}$ which makes the minimization problem \eqref{eqn:method} strictly convex.

\end{proof}

\begin{proof}[Proof of Theorem \ref{thm:sc}]

Note that the operator $T$ is equivalent to the $d$-dim Radon transformation if  $f_\bTheta$ is bounded away from $0$ on a hemisphere of $\Sp^{d-1}$. Hence, $T^*T$ has eigen functions $\varphi(x) = \exp(i x\cdot k)$ with $k \in \R^d$. The ill-posedness follows from by Theorem 3.1 in \cite{hertle:83}. For all $f \in H^q(\R^d)$, we have $T^*Tf \in H^{q+d-1}(\R^d)$. Using the spectral calculus yields $(T^*T)^\mu f \in H^{q+\mu(d-1)}(\R^d)$. Hence, for $a < r$, $f_\beta$ fulfills the following spectral source condition: There exists $\rho \in H^a(\R^d)$ such that $(T^*T)^\mu \rho = f_\bbeta$ with $\mu = \frac{r-a}{d-1}$. 

As shown in \cite{HoYa:10, Flemming:12, flemming:12b, hohage2016inverse}, this spectral source condition implies the variational source condition 
\[
cD_{\calR}^{f_\beta^{*}}(f,f_\bbeta) \leq \calR(f) - \calR(f_\bbeta) + \lambda(\|f_{S|\bTheta} - Tf\|^2_{L^2})
\]
with some $c>0$, $\calR = \|\cdot\|_{H^a}^2$, and $\lambda(x) = x^\frac{2\mu}{2\mu + 1} =  x^\frac{2(r-a)}{2(r-a)+d-1}$ if $\mu\le 1$. 

We apply the inequality $\|\varphi - \psi\|_{L^2}^2 \le \left(\frac{2}{3}\|\varphi\|_\infty + \frac{4}{3} \|\psi\|_\infty \right) \KL(\varphi;\psi)$ which holds for all nonnegative functions $\varphi, \psi \in L^\infty(\R^d)$ with $\varphi - \psi \in  L^2(\R^d)$, see \cite{BL:91, Dunker2014}. Since $\lambda$ is a power function, we get
\begin{align*}
\lambda(\|f_{S|\bTheta} - Tf\|^2_{L^2}) &\le \lambda\left(\left(\frac{2}{3}\|f_{S|\bTheta}\|_\infty + \frac{4}{3} \| Tf\|_\infty \right) \KL(f_{S|\bTheta}; Tf)\right)\\
& =  \lambda\left(\frac{2}{3}\|f_{S|\bTheta}\|_\infty + \frac{4}{3} \| Tf\|_\infty \right) \lambda \left(\KL(f_{S|\bTheta}; Tf)\right) 
\end{align*}

Hence, there is a constant $c' >0$ such that
\[
c' D_{\calR}^{f_\beta^{*}}(f,f_\bbeta) \leq \calR(f) - \calR(f_\bbeta) + \lambda\left(\KL(f_{S|\bTheta}; Tf)\right)
\]
This proves the theorem.

\end{proof}

\begin{proof}[Proof of Theorem \ref{thm:aprior_conv}]
This is a standard result in variational regularization theory. See for example Theorem 4.3 in \cite{WH:12}.

\end{proof}

\begin{proof}[Proof of Theorem \ref{thm:apriori}]
We apply Theorem \ref{thm:aprior_conv} to the special case of Theorem \ref{thm:sc}. In this case $\lambda(x) = x^\frac{2(r-a)}{2(r-a)+d-1}$ is differentiable. Hence, $-\partial(-\lambda)(x) = \lambda'(x) = \frac{2(r-a)}{2(r-a)+d-1}x^\frac{1-d}{2(r-a)+d-1}$. Replacing $x$ by $n^{-1/2}$ gives the parameter choice. The convergence rate is given by $\lambda(n^{-1/2}) = n^{-\frac{r-a}{2(r-a)+d-1}}$.

\end{proof}

\begin{proof}[Proof of Theorem \ref{thm:lepskii}]
The first part of the theorem is a consequence of Theorem 5.1 in \cite{WH:12}. The second part of the theorem is using the convergence rate from Theorem \ref{thm:apriori} together with the first part of the theorem.
\[
\lambda\left(\frac{\ln(n)}{\sqrt{n}}\right) = \left(\frac{\ln(n)}{\sqrt{n}}\right)^\frac{2(r-a)}{2(r-a)+d-1} = \ln(x)^\frac{2(r-a)}{2(r-a)+d-1}~ n^{-\frac{r-a}{2(r-a)+d-1}}~.
\] 

\end{proof}

\end{document}